\documentclass[12pt, a4paper]{article}

\usepackage[utf8]{inputenc}
\usepackage[T1]{fontenc}
\usepackage{amsmath}
\usepackage{amssymb}
\usepackage{amsthm}
\usepackage{bm}
\usepackage{geometry}
\newtheorem{theorem}{Theorem}
\newtheorem{lemma}{Lemma}
\newtheorem{assumption}{Assumption}

\title{Robust Model Order Selection via Dithered Differential Step-Down Thresholding}
\author{Aleksandr Kharin}
\date{}

\begin{document}

\maketitle
\begin{abstract}
This paper studies the problem of model order selection in stationary noise. Single-threshold detection is sensitive to extreme noise excursions, particularly when the detection threshold is lowered to capture weak deterministic components. To augment single-threshold detection, we propose a differential step-down thresholding algorithm. We use a threshold grid in this algorithm. To overcome the threshold grid misalignment error induced by grid evaluation, we utilize randomized grid dithering. Using extreme value theory, we show the clustering of the noise extrema. The stopping rule of the proposed algorithm detects this clustering and stops the algorithm to prevent false alarms. By analytically bounding the threshold grid misalignment error, we prove that our algorithm achieves asymptotic exact order recovery under the 0-1 loss function. Moreover, the proposed algorithm remains robust even if the detection threshold in the original single-threshold algorithm is lowered or extreme noise excursions occur.
\end{abstract}

\section{Introduction}
In this paper, we study the problem of model order selection (MOS). Single-threshold detection at a minimax detection threshold is sensitive to extreme noise excursions \cite{Donoho1995, Piterbarg1996, Davies1987, Adler2007}. Moreover, in practice, lowering the detection threshold to capture weak deterministic components introduces false alarms \cite{Abramovich2006, Massart2007}. Recent advances in statistical model selection focus heavily on robust adaptive procedures and oracle inequalities to control such false alarms (e.g., \cite{Barber2015, Bogdan2015, Su2017, Konev2012, Pchelintsev2018}).

We propose a differential step-down thresholding algorithm to augment single-threshold detection. This algorithm evaluates a decreasing grid of thresholds down to the detection threshold of the original single-threshold algorithm (we call this threshold the {\it floor detection threshold}). The proposed algorithm uses the following stopping rule:
$\Delta\hat{\nu} \ge 2,$
where $\Delta\hat{\nu}$ is the difference between estimates of the number of deterministic components for sequential thresholds. Note that if noise extrema exceed the floor detection threshold, this rule detects their clustering and stops the algorithm, preventing their inclusion into the model. In this paper, we primarily use the minimax detection threshold as the floor detection threshold. However, we also study the case where the floor detection threshold is lowered below the minimax detection threshold.

Under assumed detectability conditions, deterministic components are separated, whereas noise extrema cluster densely \cite{Hsing1988, Aldous1989}. We model the stationary noise via extreme value theory \cite{Leadbetter1983, deHaan2006}. In this case, the spacing between upper noise extrema converges in distribution \cite{Embrechts1997}. Thus, for noise with sub-Gaussian tails, the expected spacing between upper noise extrema tends to zero \cite{Resnick1987, Piterbarg1996}.

A significant part of this work is devoted to proving that the stopping rule $\Delta\hat{\nu} \ge 2$ detects the boundary between deterministic components and noise. Evaluating the noise over a grid of thresholds introduces a threshold grid misalignment error. By utilizing randomized grid dithering \cite{Gray1993, Widrow2008} to analytically bound this threshold grid misalignment error, we show that the proposed algorithm achieves asymptotic exact order recovery. Finally we show that our algorithm provides robustness even if floor detection threshold exceedances occur.

\section{Problem Formulation and Algorithm}

\subsection{Observation Model and Risk}

Our objective is to estimate the true model order $\nu_0$ in stationary noise over a search grid of size $M$, where $M \gg \nu_0$.
Suppose the observed data are mapped onto the search grid, yielding statistics $\mathcal{V} = \{V_1, \dots, V_M\}$:
\begin{equation} \label{SModel}
    V_i = d_i(\bm{\Theta}_{0}) + \varepsilon_i, \quad i \in \{1, \dots, M\}.
\end{equation} 
$\bm{\Theta}_0$ is a vector of continuous parameters; $d_i(\bm{\Theta}_{0}) \ge 0$ is a deterministic shift; and $\varepsilon_i$ is the noise. 
We suppose that the noise satisfies Leadbetter's $D(u_n)$ mixing condition \cite[Ch. 3]{Leadbetter1983}. We use the subscript $0$ to indicate true parameter values. For brevity, $d_i \equiv d_i(\bm{\Theta}_{0})$. 

Next, we partition $\mathcal{V}$ into a deterministic subset $\mathcal{V}^S$ containing $\nu_0$ statistics with $d_i > 0$ (deterministic components), and a noise subset $\mathcal{V}^N$ containing $M - \nu_0$ statistics with $d_i = 0$ (noise components) \cite{Ibragimov1981}. As mentioned above, we suppose that the majority of the search grid contains only noise ($d_i = 0$), i.e., we consider sparse models \cite{Donoho1995, Abramovich2006}.
Define the estimator at a threshold $\mathcal{T}$ as $\hat{\nu}(\mathcal{T}) = \sum_{i=1}^M \mathbf{1}_{\{V_i \ge \mathcal{T}\}}$.
Note that, we consider a high-dimensional regime where the deterministic shift scales proportionally to $\sqrt{\ln M}$ \cite{Donoho1995}. The estimation risk is evaluated under the 0-1 loss $L(\hat{\nu}, \nu_0) = \mathbf{1}_{\{\hat{\nu} \neq \nu_0\}}$. An estimator is consistent if $\mathbb{E}[L(\hat{\nu}, \nu_0)] = \Pr\left(\hat{\nu} \ne \nu_0 \right) \xrightarrow[M \to \infty]{} 0$ \cite{Donoho1995,Ibragimov1981}.

\subsection{Algorithm} \label{algsec}
Suppose $V_i \in \mathcal{V}^N$ have sub-Gaussian tails \cite{Piterbarg1996}. To control the family-wise error rate (FWER) over the search grid, we introduce a floor detection threshold $\mathcal{T}_{\text{floor}}$. In this paper, we mostly consider $\mathcal{T}_{\text{floor}}$ as a minimax detection threshold: 
$\mathcal{T}_{\text{floor}} = A_{\text{mmx}} \sqrt{\ln M},$
for a sufficiently large constant $A_{\text{mmx}}>0$ \cite{Massart2007}. This minimax detection threshold ensures that $\lim_{M \to \infty} \text{FWER} = 0$ \cite{Donoho1995}. Unless otherwise specified, $\mathcal{T}_{\text{floor}}$ is assumed to be the minimax detection threshold. However, we also consider the case where $\mathcal{T}_{\text{floor}}$ is lowered below the minimax detection threshold. 

To ensure statistical uniformity over the threshold grid, we apply randomized grid dithering \cite{Gray1993, Widrow2008}. Given a threshold step size $\Delta\mathcal{T} > 0$, we create a random variable $\xi \sim \text{Unif}(0, \Delta\mathcal{T})$ independently of the observed data. The threshold sequence is $$\mathcal{T}_k = \mathcal{T}_{\text{max}} - k\Delta\mathcal{T} + \xi$$ for all $k \ge 0$ such that $\mathcal{T}_k \ge \mathcal{T}_{\text{floor}}$. Note that $\mathcal{T}_{\text{max}}$ is chosen \textit{a priori} (e.g., as $C \sqrt{\ln M}$ for a sufficiently large constant $C$ \cite{Donoho1995}), i.e., $\mathcal{T}_{\text{max}}$ is independent of the observed data. This guarantees that $\xi$ preserves the unconditional uniformity of the grid alignment relative to the observed data \cite{Widrow2008}.
At each step $k \ge 1$, the algorithm computes $\Delta\hat{\nu}_k = \hat{\nu}(\mathcal{T}_k) - \hat{\nu}(\mathcal{T}_{k-1})$ over $[\mathcal{T}_k, \mathcal{T}_{k-1})$. The stopping index is defined as
$$k_{\text{stop}} =  \inf \big\{ k \ge 1 : \Delta\hat{\nu}_k \ge 2 \text{ or } \mathcal{T}_k \le \mathcal{T}_{\text{floor}} \big\}. $$
The model order estimate is $\hat{\nu}_{\text{DSDT}} = \hat{\nu}(\mathcal{T}_{k_{\text{stop}}-1})$.

\section{Theoretical Analysis}
\subsection{Detectability condition}
 Consider the set of non-zero deterministic shifts $\{d_j\}_{j=1}^{\nu_0}$. Without loss of generality, we assume that the non-zero deterministic shifts satisfy $d_1 > d_2 > \dots > d_{\nu_0}$.

\begin{assumption} \label{defgap}
 Assume that the non-zero deterministic shifts $\{d_j\}_{j=1}^{\nu_0}$ satisfy the detectability condition $d_{\nu_0} \ge \mathcal{T}_{\text{floor}} + C_D(M)$ for $C_D(M) = \omega(1)$, and are separated by $\min_{1 \le i < \nu_0} (d_i - d_{i+1}) \ge C_S(M)$ for $C_S(M) = \omega(1)$.
\end{assumption}

\textbf{Remark 1.} Assumption \ref{defgap} implies that the method cannot resolve deterministic components with identical shifts ($d_i = d_{i+1}$).

\textbf{Remark 2.} The $\omega(1)$ scaling of $C_D(M)$ and $C_S(M)$ is a sufficient condition for exact asymptotic recovery. It ensures that the weakest deterministic component is asymptotically separated from $\mathcal{T}_{\text{floor}}$ by more than $\Delta\mathcal{T}$ and that the gaps between consecutive deterministic components asymptotically exceed the threshold step size $\Delta\mathcal{T}$.

Denote the decreasing sequence of statistics from $\mathcal{V}^N$ by $V_{(1)}^N \ge V_{(2)}^N \ge \dots$.
\begin{lemma} \label{concentr}
Suppose Assumption \ref{defgap} holds. Define the events
$$\Omega_D = \big\{ \min_{i \le \nu_0} V_i^S > \max(V_{(1)}^N, \mathcal{T}_{\text{floor}}) + \Delta\mathcal{T} \big\}, \Omega_S = \big\{ \min_{1 \le i < \nu_0} (V_i^S - V_{i+1}^S) > \Delta\mathcal{T} \big\}.$$ Then, $\Pr(\Omega_D \cap \Omega_S) \to 1$ as $M \to \infty$.
\end{lemma}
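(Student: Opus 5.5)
Since $\Pr(\Omega_D\cap\Omega_S)\ge 1-\Pr(\Omega_D^c)-\Pr(\Omega_S^c)$, I will show separately that $\Pr(\Omega_D^c)\to 0$ and $\Pr(\Omega_S^c)\to 0$. The argument splits each deterministic statistic as $V_i^S=d_i+\varepsilon_i$ and controls two things: the noise carried by the $\nu_0$ deterministic components, and the largest noise statistic $V_{(1)}^N$. Throughout, $\nu_0$ is fixed, or at least grows slowly relative to $M$, while $\Delta\mathcal{T}$ is a fixed step. Hence $C_D(M)-\Delta\mathcal{T}\to\infty$ and $C_S(M)-\Delta\mathcal{T}\to\infty$, as Remark 2 indicates.

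\emph{Step 1: noise on the deterministic components.} Each $\varepsilon_i$ is a single stationary noise variable with a fixed, $M$-independent distribution, so it is tight. Take a union bound over the $\nu_0$ indices with a deterministic $\delta_M\to\infty$ satisfying $\delta_M=o(C_D(M))$ and $\delta_M=o(C_S(M))$; for instance $\delta_M=\min(C_D,C_S)/4$. Then $\Pr(\max_{i\le\nu_0}|\varepsilon_i|>\delta_M)\le \nu_0 \Pr(|\varepsilon_1|>\delta_M)\to 0$. If $\nu_0$ grows, the sub-Gaussian tail makes this term decay like $\nu_0 e^{-c\delta_M^2}$, which still vanishes under mild growth. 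On the complementary event, for every $i<\nu_0$,
\[
V_i^S-V_{i+1}^S\ge d_i-d_{i+1}-2\delta_M\ge C_S(M)-2\delta_M>\Delta\mathcal{T}
\]
for large $M$. This already gives $\Pr(\Omega_S)\to 1$. On the same event, $\min_i V_i^S\ge d_{\nu_0}-\delta_M\ge\mathcal{T}_{\text{floor}}+C_D(M)-\delta_M$.

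\emph{Step 2: the noise maximum versus the floor.} We need $\max(V_{(1)}^N,\mathcal{T}_{\text{floor}})\le\mathcal{T}_{\text{floor}}+\delta_M$ with probability tending to one. Because $\mathcal{T}_{\text{floor}}=A_{\text{mmx}}\sqrt{\ln M}$ is the minimax threshold, $\Pr(V_{(1)}^N\ge\mathcal{T}_{\text{floor}})=\text{FWER}\to 0$, as stated in Section \ref{algsec}. If one prefers not to rely only on that statement, use $D(u_n)$ together with sub-Gaussian tails and Leadbetter's extremal theory: $V_{(1)}^N=O_P(\sqrt{\ln M})$ with normalising constant $\sqrt{2\sigma^2\ln M}$. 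Choosing $A_{\text{mmx}}$ large enough then makes exceedance of the floor asymptotically negligible. In fact a union bound alone suffices for large $A_{\text{mmx}}$: $M e^{-cA_{\text{mmx}}^2\ln M}\to0$.

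On the intersection of the good events from Steps 1 and 2,
\[
\min_i V_i^S-\max(V_{(1)}^N,\mathcal{T}_{\text{floor}})\ge C_D(M)-2\delta_M>\Delta\mathcal{T},
\]
so $\Omega_D$ holds. Combining the two steps gives the lemma.

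\emph{Main obstacle.} The delicate point is Step 2 in its dependence on what "sufficiently large $A_{\text{mmx}}$" guarantees. The lemma needs the noise maximum to sit below the floor plus a margin of order $o(C_D)$, not merely below $C\sqrt{\ln M}$. I would handle this by invoking the FWER$\to0$ property directly, which gives the statement with zero margin. The tightness of the fixed-distribution noise on the finitely many deterministic indices then supplies the remaining slack. A secondary issue is the quantification of $\nu_0$: if $\nu_0$ grows with $M$, the union bound in Step 1 requires $\nu_0\exp(-c\,\delta_M^2)\to0$, and I would state this as an implicit growth restriction.
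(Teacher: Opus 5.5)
The paper states Lemma \ref{concentr} without any proof; it is simply invoked later. So there is no argument of the author's to compare yours with. On its own terms your proof is correct and is the natural one. The $\Omega_D$ half is the same union bound $M e^{-cA_{\text{mmx}}^2\ln M}\to 0$ that the paper uses in Lemma \ref{floor_stop}. The $\Omega_S$ half is a tightness/union bound on the $\nu_0$ noise variables under the shifts; by stationarity these share the sub-Gaussian marginal assumed on $\mathcal{V}^N$. One small slip: $\delta_M=\min(C_D,C_S)/4$ is not $o(C_D)$ or $o(C_S)$. You never use that, though. You only need $C_D-2\delta_M\to\infty$ and $C_S-2\delta_M\to\infty$, which this choice does give.

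The useful part of your write-up is that it surfaces hypotheses the statement leaves implicit, and neither is an artifact of your method.

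First, $\Omega_D$ genuinely needs the minimax floor. Under the lowered floor of Theorem 3, Assumption \ref{defgap} permits $d_{\nu_0}=\mathcal{T}_{\text{floor}}+C_D(M)$ with $b_M-d_{\nu_0}\to\infty$. Then $\Omega_D$ fails with probability tending to one. This is exactly why the proof of Theorem 3 obtains $\Pr(\Omega_D^c)\to0$ from \eqref{condtn} and uses this lemma only for $\Omega_S$.

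Second, the restriction on $\nu_0$ is needed up to constants, and the relevant comparison is with $C_S(M)$ (and $C_D(M)$), not with $M$. Take i.i.d. Gaussian noise with equal gaps $d_i-d_{i+1}=C_S(M)$. The $\lfloor\nu_0/2\rfloor$ non-overlapping differences $\varepsilon_i-\varepsilon_{i+1}$ are independent $N(0,2\sigma^2)$, and their minimum is about $-2\sigma\sqrt{\ln\nu_0}$. Hence $\Omega_S$ fails with probability tending to one whenever $C_S(M)=o(\sqrt{\ln\nu_0})$.

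The lemma should therefore be read with $\Delta\mathcal{T}$ fixed, the minimax floor, and $\nu_0$ fixed or $\ln\nu_0=o(\min(C_D,C_S)^2)$, essentially as you indicate.
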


\subsection{Extrema Asymptotics} \label{extras}
Suppose the noise components $V_i \in \mathcal{V}^N$ have a continuous distribution belonging to the Gumbel maximum domain of attraction (MDA) \cite{Leadbetter1983, deHaan2006} with sub-Gaussian tail decay \cite{Piterbarg1996}. Under these conditions, their maxima are characterized by a location sequence $b_M \propto \sqrt{\ln M}$ and a scale sequence $a_M \propto 1/\sqrt{\ln M} \to 0$ \cite{Leadbetter1983, Embrechts1997}.

Denote the upper noise extrema spacing by $S = V_{(1)}^N - V_{(2)}^N$. Let us consider two cases. First case: the local anti-clustering condition $D'$ \cite{Leadbetter1983} holds. In this case, the normalized upper noise extrema spacing converges in distribution as $a_M^{-1}S \xrightarrow{d} \text{Exp}(1)$ \cite{deHaan2006}. 
Taking into account that the sub-Gaussian tail decay guarantees uniform integrability of the normalized spacings, this convergence in distribution implies convergence in mean \cite{Embrechts1997, Resnick1987}. Since $a_M \to 0$ as $M \to \infty$:
$\mathbb{E}[S] = \mathcal{O}(a_M) \xrightarrow[M \to \infty]{} 0$.

Second case: the local anti-clustering condition $D'$ \cite{Leadbetter1983} fails. On the one hand, condition $D'$ typically fails if a local correlation is present in $\{\varepsilon_i\}_{i=1}^M$ from \eqref{SModel} \cite{Leadbetter1983, Embrechts1997}. On the other hand, this correlation causes the clustering of noise extrema \cite{Hsing1988, Aldous1989, Falk2010}. Such clustering further shrinks the upper noise extrema spacings asymptotically.

\subsection{Consistency Analysis}

\begin{lemma} \label{sig_zone}
Suppose that $\Omega_S \cap \Omega_D$ holds. If $\mathcal{T}_k > V_{(1)}^N$, then $\Delta\hat{\nu}_k \le 1$.
\begin{proof}
Since the distance between consecutive elements of $\mathcal{V}^S$ exceeds $\Delta\mathcal{T}$ on $\Omega_S$, one can conclude that an interval of width $\Delta\mathcal{T}$ covers at most one element of $\mathcal{V}^S$, thus the stopping rule $\Delta\hat{\nu}_k \ge 2$ is not met.
\end{proof}
\end{lemma}

\begin{lemma} \label{floor_stop}
$\Pr(V_{(1)}^N \ge \mathcal{T}_{\text{floor}}) \to 0$ as $M \to \infty$. If $V_{(1)}^N < \mathcal{T}_{\text{floor}}$, the proposed algorithm yields zero false alarms.
\begin{proof}
Firstly, one can obtain \cite{Massart2007}
$$\Pr(V_{(1)}^N \ge \mathcal{T}_{\text{floor}}) \le M \exp(-c \mathcal{T}_{\text{floor}}^2).$$
Since $\mathcal{T}_{\text{floor}} = A_{\text{mmx}} \sqrt{\ln M}$, if $A_{\text{mmx}} > 1/\sqrt{c}$, then $\Pr(V_{(1)}^N \ge \mathcal{T}_{\text{floor}})$ tends to zero as $M \to \infty$. 
Moreover, if $V_{(1)}^N < \mathcal{T}_{\text{floor}}$, the threshold decreases along the threshold grid until $\mathcal{T}_k \le \mathcal{T}_{\text{floor}}$. Thus, the proposed algorithm stops without false alarms.
\end{proof}
\end{lemma}

As discussed in the Introduction, practical applications frequently require lowering the detection threshold to detect weak deterministic components, which represents a well-known trade-off in detection theory \cite{Donoho1995, Abramovich2006}. To demonstrate the limitations of the algorithm with deterministic threshold grid and lowered floor detection threshold in such high-sensitivity regimes, we provide the following result.
\begin{theorem} \label{det_failure}
Suppose that Assumption \ref{defgap} is fulfilled. Suppose the threshold grid is deterministic ($\xi \equiv 0$). Assume that the floor detection threshold $\mathcal{T}_{\text{floor}}$ is lowered such that $\lim_{M \to \infty} \Pr(V_{(1)}^N \ge \mathcal{T}_{\text{floor}}) = 1$. Even if the deterministic components are separated from the noise ones, i.e.,
\begin{equation} \label{condtn}
 \Pr\big(\min_{i \le \nu_0} V_i^S > V_{(1)}^N + \Delta\mathcal{T}\big) \xrightarrow[M \to \infty]{} 1,
\end{equation}
differential step-down thresholding becomes inconsistent for noise distributions belonging to the Gumbel MDA with sub-Gaussian tails (see Section \ref{extras}).
\begin{proof}
First of all, to prove inconsistency, it is sufficient to evaluate the expected risk only for the case where the local anti-clustering condition $D'$ holds.
Let $\mathcal{T}_k = \mathcal{T}_{\text{max}} - k\Delta\mathcal{T}$. The threshold grid misalignment event $\mathcal{E}_{\text{mis}}$ occurs if $S > (V_{(1)}^N - \mathcal{T}_{\text{max}}) \bmod({\Delta\mathcal{T}}).$
Using \cite{Leadbetter1983}, $a_M^{-1}(V_{(i)}^N - b_M)$ converge in distribution to $Q_i$, which are points of a Poisson process with intensity $\lambda(x) = e^{-x}$. Consequently, $a_M^{-1}S \xrightarrow{d} Q_1 - Q_2$.
Since $b_M \to \infty$ and $b_{M+1} - b_M = o(a_M)$ (see Section \ref{extras}), there exists a subsequence $\{M_j\}$ such that $(b_{M_j} - \mathcal{T}_{\text{max}}) \bmod({\Delta\mathcal{T}}) = o(a_{M_j})$. Thus, along $\{M_j\}$ we obtain:
$$\lim_{j \to \infty} \Pr(\mathcal{E}_{\text{mis}}) = \Pr(Q_1 > 0, Q_2 < 0) = e^{-1},$$
and since $\mathbb{E}[L] \ge \Pr(\mathcal{E}_{\text{mis}})$, the expected risk does not tend to zero.
\end{proof}
\end{theorem}

Now let us study the randomized threshold grid. It is important to note that the randomized grid dithering decouples the threshold grid alignment from the random values of the noise extrema. This fact makes it possible to evaluate the threshold grid misalignment error analytically.

\begin{lemma} \label{grid_misalignment}
If $V_{(1)}^N \ge \mathcal{T}_{\text{floor}}$, the probability of threshold grid misalignment is bounded by $\mathbb{E}[S]/\Delta\mathcal{T}$.
\begin{proof}
Suppose that $V_{(1)}^N \ge \mathcal{T}_{\text{floor}}$. Let $k$ be such that $V_{(1)}^N \in [\mathcal{T}_k, \mathcal{T}_{k-1})$. In this case, one of two disjoint events occurs. In the stop event $\mathcal{E}_{\text{stop}}$, $V_{(1)}^N \in [\mathcal{T}_k, \mathcal{T}_{k-1})$ and $V_{(2)}^N \in [\mathcal{T}_k, \mathcal{T}_{k-1})$. In the threshold grid misalignment event $\mathcal{E}_{\text{mis}}$, $\mathcal{T}_k \in (V_{(2)}^N, V_{(1)}^N]$. 

The event $\mathcal{E}_{\text{mis}}$ occurs if and only if  $U = V_{(1)}^N - \mathcal{T}_k < S$. Taking into account that the random variable $\xi$ is generated independently of the observed data (see Section \ref{algsec}), one can conclude that $U \sim \text{Unif}(0, \Delta\mathcal{T})$ and $U$ is independent of $S$ \cite{Gray1993, Widrow2008}. Thus, the probability of threshold grid misalignment is bounded by: $\Pr(S > U \mid S) \le \frac{S}{\Delta\mathcal{T}}. $
Finally, one can obtain:
$$\Pr(\mathcal{E}_{\text{mis}}) = \mathbb{E}[\Pr(S > U \mid S)] \le \frac{\mathbb{E}[S]}{\Delta\mathcal{T}}.$$
\end{proof}
\end{lemma}

\begin{theorem}
Suppose that Assumption \ref{defgap} is fulfilled. In this case, the proposed differential step-down thresholding achieves asymptotic exact order recovery: $$\mathbb{E}[L(\hat{\nu}_{\text{DSDT}}, \nu_0)] \xrightarrow[M \to \infty]{} 0.$$
\begin{proof}
First of all, let us define $\Omega_{A} = \{V_{(1)}^N \ge \mathcal{T}_{\text{floor}}\}$. Next, we partition the expected risk as follows:
\begin{equation} \label{rp1}
    \mathbb{E}[L(\hat{\nu}_{\text{DSDT}}, \nu_0)] \le \Pr(\Omega_S^c) + \Pr(\Omega_D^c) + \Pr(\text{Error} \cap \Omega_S \cap \Omega_D).
\end{equation}
Suppose that the event $\Omega_S \cap \Omega_D$ holds. In this case, the deterministic components are detected without meeting the stopping rule (see Lemma \ref{sig_zone}). Upon reaching the noise subset, if $\Omega_A^c$ holds, the algorithm stops at the $\mathcal{T}_{\text{floor}}$ (see Lemma \ref{floor_stop}). Next, if $\Omega_A$ holds, then an error occurs only if $\mathcal{E}_{\text{mis}}$ occurs, thus using \eqref{rp1} one can write:
\begin{equation} \label{rp2}
    \mathbb{E}[L(\hat{\nu}_{\text{DSDT}}, \nu_0)] \le \Pr(\Omega_S^c) + \Pr(\Omega_D^c) + \Pr(\mathcal{E}_{\text{mis}} \cap \Omega_{A}).
\end{equation}
Using Lemma \ref{concentr}, we get $\Pr(\Omega_S^c) \to 0$ and $\Pr(\Omega_D^c) \to 0$. Also, for a sufficiently high floor detection threshold (i.e., for a sufficiently large $A_{\text{mmx}}$), $\Pr(\Omega_{A}) \xrightarrow[M \to \infty]{} 0$ (see Lemma \ref{floor_stop}). Finally, $\mathbb{E}[L(\hat{\nu}_{\text{DSDT}}, \nu_0)] \xrightarrow[M \to \infty]{} 0$. 
\end{proof}
\end{theorem}

\begin{theorem}
Suppose that Assumption \ref{defgap} is fulfilled. Assume that the floor detection threshold $\mathcal{T}_{\text{floor}}$ is lowered below the minimax detection threshold such that 
$$ \lim_{M \to \infty} \Pr(V_{(1)}^N \ge \mathcal{T}_{\text{floor}}) = 1. $$
If condition \eqref{condtn} holds, then the proposed differential step-down thresholding achieves asymptotic exact order recovery $$\mathbb{E}[L(\hat{\nu}_{\text{DSDT}}, \nu_0)] \xrightarrow[M \to \infty]{} 0.$$
\begin{proof}
Let us define $\Omega_{A} = \{V_{(1)}^N \ge \mathcal{T}_{\text{floor}}\}$. From the same expected risk partition as in \eqref{rp1} and \eqref{rp2}, we obtain:
\begin{equation} \label{th3bound}
    \mathbb{E}[L(\hat{\nu}_{\text{DSDT}}, \nu_0)] \le \Pr(\Omega_S^c) + \Pr(\Omega_D^c) + \Pr(\mathcal{E}_{\text{mis}} \cap \Omega_{A}).
\end{equation}
Firstly, using Lemma \ref{concentr}, we get $\Pr(\Omega_S^c) \xrightarrow[M \to \infty]{} 0$. Secondly, from condition \eqref{condtn} one can conclude  $\Pr(\Omega_D^c) \xrightarrow[M \to \infty]{} 0$.
Next, using  Lemma \ref{grid_misalignment}, we get 
$$\Pr(\mathcal{E}_{\text{mis}} \cap \Omega_{A}) \le \Pr(\mathcal{E}_{\text{mis}}) \le \mathbb{E}[S]/\Delta\mathcal{T}.$$
Since $\mathbb{E}[S] \xrightarrow[M \to \infty]{} 0$, one can conclude that $\Pr(\mathcal{E}_{\text{mis}} \cap \Omega_{A}) \xrightarrow[M \to \infty]{} 0$. Finally, from \eqref{th3bound} we obtain $\mathbb{E}[L(\hat{\nu}_{\text{DSDT}}, \nu_0)] \xrightarrow[M \to \infty]{} 0$.
\end{proof}
\end{theorem}

\section{Conclusion}
In this paper, we studied a differential step-down thresholding algorithm for model order selection. 


\end{document}